\documentclass[11pt,reqno,letterpaper]{amsart}
\usepackage{color}
\usepackage[colorlinks=true, allcolors=blue,backref=page]{hyperref}
\usepackage{amsmath, amssymb, amsthm}
\usepackage{mathrsfs}

\usepackage{mathtools}
\usepackage[noabbrev,capitalize,nameinlink]{cleveref}
\crefname{equation}{}{}
\usepackage{fullpage}
\usepackage[noadjust]{cite}
\usepackage{graphics}
\usepackage{pifont}
\usepackage{tikz}
\usepackage{bbm}
\usepackage[T1]{fontenc}

\usetikzlibrary{arrows.meta}

\usepackage{environ}
\usepackage{framed}
\usepackage{url}
\usepackage[linesnumbered,ruled,vlined]{algorithm2e}
\usepackage[noend]{algpseudocode}
\usepackage[labelfont=bf]{caption}
\usepackage{cite}
\usepackage{framed}
\usepackage[framemethod=tikz]{mdframed}
\usepackage{appendix}
\usepackage{graphicx}
\usepackage[textsize=tiny]{todonotes}
\usepackage{tcolorbox}
\usepackage{enumerate}
\allowdisplaybreaks[1]
\usepackage{enumerate}
\usepackage{stmaryrd}

\usepackage[margin=1in]{geometry}
\usepackage[shortlabels]{enumitem}
\crefformat{enumi}{#2#1#3}
\crefrangeformat{enumi}{#3#1#4 to~#5#2#6}
\crefmultiformat{enumi}{#2#1#3}
{ and~#2#1#3}{, #2#1#3}{ and~#2#1#3}

\usepackage{float}

\usepackage[normalem]{ulem}

\usepackage[shortlabels]{enumitem}
\crefformat{enumi}{#2#1#3}
\crefrangeformat{enumi}{#3#1#4 to~#5#2#6}
\crefmultiformat{enumi}{#2#1#3}
{ and~#2#1#3}{, #2#1#3}{ and~#2#1#3}

\DeclareSymbolFont{symbolsC}{U}{pxsyc}{m}{n}
\SetSymbolFont{symbolsC}{bold}{U}{pxsyc}{bx}{n}
\DeclareFontSubstitution{U}{pxsyc}{m}{n}
\DeclareMathSymbol{\medcircle}{\mathbin}{symbolsC}{7}

\crefname{algocf}{Algorithm}{Algorithms}

\crefname{equation}{}{} 
 
\AtBeginEnvironment{appendices}{\crefalias{section}{appendix}} 

\usepackage[color,final]{showkeys}

\colorlet{refkey}{orange!20}
\colorlet{labelkey}{blue!30}

\crefname{algocf}{Algorithm}{Algorithms}

\numberwithin{equation}{section}
\newtheorem{theorem}{Theorem}[section]
\newtheorem{proposition}[theorem]{Proposition}
\newtheorem{lemma}[theorem]{Lemma}
\newtheorem{claim}[theorem]{Claim}

\crefname{claim}{Claim}{Claims}

\newtheorem{corollary}[theorem]{Corollary}

\newtheorem*{question*}{Question}

\theoremstyle{definition}
\newtheorem{definition}[theorem]{Definition}

\newtheorem{question}[theorem]{Question}
\newtheorem*{definition*}{Definition}
\newtheorem{example}[theorem]{Example}

\theoremstyle{remark}
\newtheorem*{remark}{Remark}

\newlist{enumthm}{enumerate}{1}
\setlist[enumthm]{label=\textup{(\roman*)},ref=\thethm(\roman*)}
\Crefname{enumthmi}{Theorem}{Theorems}

\newcommand{\R}{\mathbb R}

\newcommand{\mb}{\mathbb}

\newcommand{\tw}{\operatorname{tw}}
\newcommand{\elltwo}{\ell_2}
\newcommand{\norm}[1]{\left\lVert #1 \right\rVert}

\newcommand{\eps}{\varepsilon}

\let\originalleft\left
\let\originalright\right
\renewcommand{\left}{\mathopen{}\mathclose\bgroup\originalleft}
\renewcommand{\right}{\aftergroup\egroup\originalright}

\allowdisplaybreaks

\newif\ifpublic

\ifpublic

\newcommand{\ignore}[1]{}

\else

\fi

\title{On graphically local versions of metric embeddings}

\author{Vishesh Jain}
\address{Department of Mathematics, Statistics, and Computer Science,
University of Illinois Chicago, Chicago, IL 60607, USA}
\email{visheshj@uic.edu}

\author{Duan Tu}
\address{Department of Mathematics, Statistics, and Computer Science,
University of Illinois Chicago, Chicago, IL 60607, USA}
\email{dtu4@uic.edu}

\begin{document}

\begin{abstract}
We consider the problem of \emph{graphically local} metric embedding, i.e.~embedding points from an arbitrary finite metric space into a target metric space while preserving, up to a small distortion, only a subset of the pairwise distances specified by a bounded degree graph $G$. 

We provide a general reduction showing that, in many cases, this is no easier than embedding the points while approximately preserving \emph{all} pairwise distances. As an illustration of our general reduction, we show that there exists a Euclidean metric space $X$ on $n$ points along with a graph $G = (X,E)$ of maximum degree $3$ such that any embedding of $X$ into $\ell_2^m$ which only preserves distances specified by $E$ up to a relative error of $(1+\eps)$ must satisfy $m = \Omega(\eps^{-2}\log n)$. 

Our lower bound matches the upper bound on the dimension coming from the Johnson-Lindenstrauss lemma for approximately preserving all pairwise distances;  previously, such a lower bound was known only for the class of noncontracting embeddings [Schechtman-Shraibman, Discrete \& Computational Geometry, 2009]. Moreover, the condition that the maximum degree of the graph is $3$ is best possible: for graphs $G$ of maximum degree $2$ (or more generally, treewidth at most $2$), any metric space embeds $G$-isometrically into any two-dimensional normed space.  
\end{abstract}

\maketitle

\section{Introduction}

Dimension reduction is a cornerstone of modern data analysis and theoretical computer science, providing methods to represent high-dimensional data in lower-dimensional spaces while preserving essential geometric properties. The Johnson-Lindenstrauss (JL) lemma \cite{johnson1984extensions} is a seminal result, establishing that any set of $n$ points in Euclidean space can be embedded into $\ell_2^{O(\log n/\eps^2)}$ (the Euclidean space in $O(\log n / \eps^2)$ dimensions) while preserving all pairwise Euclidean distances up to a $(1+\eps)$ distortion. It was shown by Larsen and Nelson that this bound is tight up to constant factors for $\eps \in (n^{-0.499}, 1)$ (see \cite[Theorem~2]{LarsenNelson17} for a more general statement). Another seminal result in the theory of metric embeddings is Bourgain's embedding theorem \cite{bourgain1985lipschitz}, which asserts that every metric space on $n$ points can be embedded into Euclidean space with distortion $O(\log n)$; this is known to be tight up to constant factors (see,~e.g.,~\cite{matouvsek2013lecture}). Here, we say that an embedding $f$ from a metric space $(X,d_X)$ to a metric space $(Y,d_Y)$ has distortion at most $D \geq 1$ if there exists $r > 0$ such that
\[r\cdot d_X(x,y) \leq d_Y(f(x), f(y)) \leq D\cdot r\cdot d_X(x,y) \qquad \forall x,y \in X.\]

\paragraph{\bf Local dimension reduction} In numerous applications, particularly those dealing with massive or inherently complex datasets like social networks or biological data, preserving the full distance information is often unnecessary and may be computationally expensive without incurring high distortion. Instead, maintaining the \emph{local} structure of the data – the geometry and relationships of ``neighboring'' points – is of primary interest. This motivates the formal study of \emph{local dimension reduction}, initiated by Abraham, Bartal, and Neiman \cite{abraham2007local} (see also the expanded journal version \cite{abraham2015local}). They proposed several novel formalisms \cite[Definition~1]{abraham2007local} to capture the notion of local distortion that are based on the metric space's intrinsic structure, such as \emph{$k$-local distortion}: an embedding $f$ from a metric space $(X, d_X)$ to a metric space $(Y,d_Y)$ is said to have $k$-local distortion $\alpha$ if $d_Y(f(u), f(v)) \leq d_X(u,v)$ for all $u,v \in X$ and $d_Y(f(u), f(v))\geq d_X(u,v)/\alpha$ for every $u$ and every $v$ which is among the $k$-nearest neighbors of $u$. 

The results of \cite{abraham2007local} demonstrated that for their metric-based notions of local distortion, it is often possible to achieve embeddings with distortion or dimension dependent only on the locality parameter (e.g.~$k$, in the notion of $k$-local distortion above) rather than the total number of points $n$. For instance, improving their result \cite[Theorem~2]{abraham2007local}, they showed in follow up work \cite[Theorem~1]{abraham2009low} that any metric space (on $n$ points) can be embedded into $\ell_p^{O(e^p \log^2 k)}$ with $k$-local distortion $O(\log k/p)$ (for any $k\leq n$, $1\leq p \leq \log k$). In the low distortion setting of the JL lemma, they showed \cite[Theorem~2]{abraham2009low} that for any $\eps > 0$ and $p\geq 1$, an ultrametric space admits an embedding into $\ell_p^{O(\log k/\eps^3)}$ with $k$-local distortion ($1+\eps$). 

A natural question (asked in \cite[Section~11]{abraham2007local}) is whether the ultrametricity assumption in the above result can be removed; specifically, is there a $k$-local version of the JL lemma, i.e.~ can any finite set of points in $\ell_2$ be embedded into $\ell_2^{O(\log k/\eps^2)}$ with distortion $(1+\eps)$? This was answered in the negative by Schechtman and Shraibman \cite{schechtman2009lower}, who constructed a set of $n+1$ points $X\subseteq \mb{R}^n$ such that any embedding $f: X \to \ell_2^m$ with $3$-local distortion $(1+\eps)$ (for a sufficiently small constant $\eps > 0$) must have dimension $m = \Omega (\log n)$ \cite[Theorem~9]{schechtman2009lower} (see also \cite[Theorem~8]{schechtman2009lower} which obtains a lower bound with near optimal $\eps$ dependence under more requirements on the embedding).

\paragraph{\bf Graphically local dimension reduction} The focus of this article is a different, graphical notion of local dimension reduction, which was studied by Schechtman and Shraibman \cite[Section~5]{schechtman2009lower}. 

\begin{definition}
    \label{def:G-local-distortion}
Let $(X, d_X)$ and $(Y,d_Y)$ be metric spaces. Let $G = (X,E)$ be a graph whose vertices are points of $X$. We say that a (not-necessarily injective) map $f: (X,d_X) \to (Y,d_Y)$ is a $G$-local $D$-embedding, where $D\geq 1$ is a real number, if there is a real number $r > 0$ such that 
\begin{equation}
\label{eq:distortion-local}
r \cdot d_X(x,y) \leq d_Y(f(x),f(y)) \leq D\cdot r \cdot d_X(x,y) \qquad \forall \{x,y\} \in E. 
\end{equation}
The infimum of the numbers $D\geq 1$ such that $f$ is a $G$-local $D$-embedding is called the $G$-local distortion of $f$.
\end{definition}

\begin{remark}
In the case that $G$ is the complete graph on $X$, this reduces to the usual notion of (global) distortion. Also, $k$-local distortion essentially corresponds to the case when $G$ is the $k$-nearest neighbour graph on $X$. 
\end{remark}

Perhaps the most natural choice of locality parameter in this setting is the maximum degree $\Delta$ of the graph $G$. In a similar spirit as the work of Abraham, Bartal, and Neiman \cite{abraham2007local}, one can ask whether it is possible to achieve embeddings with $G$-local distortion or dimension dependent only on this locality parameter $\Delta$. The possibility of this is suggested by the following example, highlighted in both \cite{abraham2009low} and \cite{schechtman2009lower}.  

\begin{example}
\label[example]{example:equilateral}
    Let $(X,d_X)$ be the set of points $X = \{e_1,\dots, e_n\} \subseteq \mb{R}^n$ endowed with the Euclidean metric. Alon \cite{alon2003problems} showed that any embedding $f: (X, d_X) \to \ell_2^m$ with distortion $(1+\eps)$ must satisfy $m = \Omega(\log n/(\log (1/\eps)\eps^2))$. However, for any $G = (X,E)$ of maximum degree $\Delta$, one may construct a $G$-local embedding $g: X \to \ell_2^{O(\log \Delta/\eps^2)}$ with distortion $(1+\eps)$ as follows. First, since $G$ has maximum degree at most $\Delta$, we may greedily find a map $h: X \to \{e_1,\dots, e_{\Delta+1}\} \subseteq \ell_2^{\Delta+1}$ such that $h(u)\neq h(v)$ if $\{u,v\} \in E$. The crucial property of $h$ is that it is a $G$-local isometry, i.e.~for any $\{u,v\}\in E$, $d_X(u,v) = \|h(u)-h(v)\|_2$. Now, we compose $h$ with a Johnson-Lindenstrauss map from $\ell_2^{\Delta+1}$ to $\ell_2^{O(\log \Delta/\eps^2)}$ to obtain the desired embedding $g$. 
\end{example}

Since the set in \cref{example:equilateral} is a standard ``hardness'' example for the JL lemma, one may ask whether there is a $G$-local version of the JL lemma depending on the maximum degree $\Delta$. For instance, this was asked in \cite{jain2023dimension}, where it was observed that if this were true, the embedding achieving this result would necessarily have to be non-linear. To the best of our knowledge, the only lower bound known for this problem is due to Schechtman and Shraibman. In \cite[Theorem~11]{schechtman2009lower}, they show that for $X$ as in \cref{example:equilateral} and for any $d$-regular $G = (X,E)$ with $d\geq 3$ and second eigenvalue bounded by $d/2$, any $G$-local $(1+\eps)$-embedding $f: X \to \ell_2^m$ \emph{which is non-contracting} (i.e.~satisfies $\|f(x)-f(y)\|_2 \geq (1-\eps)\|x-y\|_2$ for \emph{all} $x,y \in X$) must satisfy $m = \Omega(\log n)$. Of course, the non-contracting assumption precludes the embedding of \cref{example:equilateral}, which achieves $m = O(\log \Delta/\eps^2)$.

\paragraph{\bf Our contribution}  We show that in many settings of interest, including those of the JL lemma and Bourgain's embedding theorem, $G$-local embeddings do not provide \emph{any} asymptotic saving in the dimension/distortion over global embeddings. This is in sharp contrast to metrically local embeddings (as in \cite{abraham2007local}) for which such savings are possible in high distortion regimes, as discussed above. 

Our main technical contribution is a general reduction (\cref{thm:gen-lb}), which takes as input a metric space $(X,d_X)$ and constructs a metric extension $(Z,d_Z)$ along with a graph $G = (Z,E)$ of maximum degree $3$ such that any map $f: (Z, d_Z) \to (Y,d_Y)$ which (approximately) preserves distances for points in $Z$ connected by an edge must necessarily (approximately) preserve \emph{all} pairwise distances among points in $X$ (in fact, there is a version of this statement for any distortion $D$). By applying this reduction with $(X, d_X)$ being a known hard instance for some metric embedding problem, we are able to construct hard instances for $G$-local metric embeddings. We illustrate this with two applications: \cref{cor:local-JL}, which is a lower bound for $G$-local JL which is optimal up to a constant in the stated range of $\eps$, and \cref{cor:l1-embedding}, which is a lower bound for $G$-local embeddings of arbitrary finite metric spaces into Euclidean space, which is again optimal up to a constant. We also highlight some open problems.

The condition that the graph in our reduction has maximum degree $3$ is best possible. In \cref{sec:upper bounds}, we show that if $G$ has maximum degree at most $2$ (or more generally, treewidth at most $2$), then any metric space embeds $G$-isometrically into $\mb{R}^2$ equipped with any norm.

\subsection*{Acknowledgments}
The authors used generative AI tools for assistance with figures and proofreading. The authors thank Sidhanth Mohanty for useful discussions and an anonymous referee for suggesting a simplification that led to the unified formulation of \cref{thm:gen-lb} appearing here. V.J.~is partially supported by NSF CAREER award DMS-2237646.

\section{Upper bounds for graphs of bounded treewidth}
\label{sec:upper bounds}

Treewidth is a fundamental graph parameter that measures how ``tree-like'' a graph is.
Graphs of small treewidth admit efficient dynamic-programming algorithms for many problems
that are intractable on general graphs. This phenomenon is central in parameterized complexity:
a large class of NP-hard graph problems become fixed-parameter tractable (FPT) when
parameterized by treewidth (or related parameters). Concretely, many optimization and
constraint satisfaction problems can be solved in time
\[
f(k)\cdot n^{O(1)} \qquad\text{on graphs of treewidth at most }k,
\]
where \(n\) is the number of vertices and \(f\) depends only on \(k\).
This includes (among many others) \textsc{Vertex Cover}, \textsc{Independent Set}, and \textsc{Hamilton Cycle} (see, e.g., \cite{cygan2015parameterized}).

Motivated by this, we study $G$-local metric embeddings for graphs $G$ of bounded treewidth. 

\begin{definition}[Tree decomposition and treewidth]\label{def:tw-bags}
A {tree decomposition} of a graph \(G=(V,E)\) is a pair \((T,\mathcal{B})\) where
\(T\) is a tree and \(\mathcal{B}=\{B_t\subseteq V: t\in V(T)\}\) is a family of subsets
(called {bags}) such that:
\begin{enumerate}[label=(\roman*)]
\item \(\bigcup_{t\in V(T)} B_t = V\).
\item For every edge \(\{u,v\}\in E\), there exists \(t\in V(T)\) such that \(\{u,v\}\subseteq B_t\).
\item ({Running intersection property}) For each vertex \(v\in V\), the set
\(\{t\in V(T): v\in B_t\}\) induces a connected subtree of \(T\).
\end{enumerate}
The {width} of \((T,\mathcal{B})\) is \(\max_{t\in V(T)}(|B_t|-1)\).
The {treewidth} \(\tw(G)\) is the minimum width over all tree decompositions of \(G\).
\end{definition}
\begin{example}
    For any tree $T$, $\tw(T) =1$. For the cycle graph $C_n$ on $n\geq 3$ vertices, $\tw(C_n) = 2$. For the complete graph $K_n$ on $n$ vertices, $\tw(K_n) = n-1$. 
\end{example}

Our first result shows that any Euclidean metric $G$-locally isometrically embeds into $\ell_2^{k}$, provided that $\tw(G)\leq k$. 

\begin{theorem}\label{thm:twk-euclid}
Let \(G=(X,E)\) be a graph with \(\tw(G)\le k\).
Assume \((X,d)\) is a Euclidean metric, i.e.\ there exists an embedding \(\iota:X\to \elltwo\) such that
\(d(x,y)=\norm{\iota(x)-\iota(y)}_2\) for all \(x,y\in X\).
Then there exists a map \(f:X\to\R^k\) such that
\[
\norm{f(x)-f(y)}_2 = d(x,y)\qquad \forall \{x,y\}\in E.
\]
In words, every Euclidean metric space admits a \(G\)-local isometry into \(\R^k\) whenever \(\tw(G)\le k\).
\end{theorem}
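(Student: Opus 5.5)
We build $f$ bag by bag along a tree decomposition, using at each step only the fact that the points of a single bag span a low-dimensional affine subspace.

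\textbf{Setup.} Fix a tree decomposition $(T,\mathcal{B})$ of $G$ of width at most $k$, so every bag has at most $k+1$ points. Root $T$ at some node $t_0$ and process the nodes in BFS order. We maintain the following invariant: after processing a set of nodes forming a subtree $T'$ containing $t_0$, the map $f$ is defined on $\bigcup_{t\in T'}B_t$, and for each $t\in T'$ the restriction $f|_{B_t}$ is an isometry with respect to $d$. By the running intersection property, every vertex of $X$ first appears at a unique node, so defining $f$ at that node is consistent. Every edge lies inside some bag (property (ii)), so the invariant at the end gives $\|f(x)-f(y)\|_2=d(x,y)$ for all $\{x,y\}\in E$.

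\textbf{Base step.} The points $\iota(B_{t_0})$ are at most $k+1$ points in $\elltwo$. They lie in an affine subspace of dimension at most $k$, which is isometric to $\R^k$ (or a subspace of it). Hence $B_{t_0}$ embeds isometrically into $\R^k$, and we set $f$ on $B_{t_0}$ accordingly.

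\textbf{Inductive step.} Let $t$ be a child of an already processed node $s$. Set $S=B_t\cap B_s$ and $N=B_t\setminus B_s$. By running intersection, $N$ is disjoint from every previously processed bag and $S$ is exactly the set of already-placed vertices of $B_t$. Since $f|_{B_s}$ is an isometry, $f|_S$ is an isometry, so $f|_S$ and $\iota|_S$ are isometric point sets.

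We need an isometric embedding $g$ of $B_t$ into $\R^k$ with $g|_S=f|_S$. First take any isometric embedding $h:B_t\to\R^k$ (exists by the base-step argument). Then $h|_S$ and $f|_S$ are congruent finite subsets of $\R^k$. By the standard fact that any isometry between subsets of Euclidean space extends to a rigid motion of the whole space, there is a rigid motion $\phi$ of $\R^k$ with $\phi\circ h|_S=f|_S$. Set $f=\phi\circ h$ on $N$. Then $f|_{B_t}=\phi\circ h$ is an isometry, and the invariant is preserved.

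\textbf{Main obstacle.} The delicate point is the extension of the isometry $h|_S\to f|_S$ to a rigid motion of $\R^k$ itself, without enlarging the dimension. The cleanest justification is the following.
\begin{itemize}
\item[(a)] Translate so that one point of $S$ is at the origin.
\item[(b)] The isometry preserves pairwise distances and hence inner products. It therefore extends to a linear isometry between the linear spans of the two sets; this is well defined because Gram matrices determine linear dependencies.
\item[(c)] A linear isometry between two subspaces of $\R^k$ of equal dimension extends to an orthogonal map of $\R^k$, by mapping orthonormal complements to each other.
\end{itemize}
Everything else is bookkeeping with the running intersection property, which guarantees that no vertex is ever placed twice inconsistently and that new vertices only need to be compatible with $S$.
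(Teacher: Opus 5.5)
Your proof is correct and follows essentially the same route as the paper. In both, the embedding is grown along the tree structure: each set of at most $k+1$ points is embedded isometrically into $\R^k$, and it is glued on via a rigid motion that matches the already-placed overlap. The only differences are cosmetic: you work bag by bag with the tree decomposition and the running intersection property directly, whereas the paper passes to a $k$-tree supergraph and adds one vertex at a time onto a $k$-clique, and your steps (a)--(c) justify the congruence-extension fact that the paper uses without proof.
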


For graphs of treewidth at most $2$, a much stronger statement is true. 

\begin{theorem}\label{thm:tw2}
Let $G=(X,E)$ be a graph with $\tw(G)\le 2$, and let $(X,d)$ be an arbitrary finite metric space.
Let $(V,\|\cdot\|)$ be any two-dimensional normed space.
Then there exists a map $f:X\to V$ such that
\[
\|f(x)-f(y)\| = d(x,y)\qquad \forall \{x,y\}\in E.
\]
In words, every finite metric space admits a $G$-local isometry into {any} normed plane whenever $\tw(G)\le 2$.
\end{theorem}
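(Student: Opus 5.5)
Since $\tw(G)\le 2$, we may replace $G$ by a supergraph on the same vertex set that is a \emph{2-tree} (or a disjoint union of edges, vertices and 2-trees), adding edges only inside bags. A $G$-local isometry for a supergraph is in particular one for $G$, so it suffices to treat graphs built as follows. Start from a single triangle (or an edge, or a vertex), and repeatedly add a new vertex $w$ adjacent to both endpoints of an existing edge $\{u,v\}$. Disconnected pieces can be embedded independently. Concretely, I would take a tree decomposition of width $\le 2$, make every bag have size exactly $3$ (padding with neighbours in adjacent bags where necessary), and make every edge of $T$ correspond to bags sharing exactly two vertices. I would then add all pairs inside each bag as edges. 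This standard normalization is the first step.

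I would then define $f$ by induction along this construction order, keeping the invariant that $\|f(x)-f(y)\|=d(x,y)$ for every edge $\{x,y\}$ of the (augmented) graph among the vertices placed so far. The base cases are a single vertex, trivial, and a single edge, where we map $u$ to $0$ and $v$ to any vector of norm $d(u,v)$. When a new vertex $w$ is attached to the existing edge $\{u,v\}$, we have $\|f(u)-f(v)\|=d(u,v)=:c$, and we need a point $p\in V$ with
\[
\|p-f(u)\|=a:=d(u,w),\qquad \|p-f(v)\|=b:=d(v,w).
\]
By the triangle inequality in $(X,d)$, the numbers $a,b,c$ satisfy $|a-b|\le c\le a+b$. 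Setting $f(w)=p$ preserves the invariant. The initial triangle is just the case of an edge followed by one such attachment.

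The key lemma, and the main obstacle, is therefore: in any two-dimensional normed space, if $\|q-q'\|=c$ and $|a-b|\le c\le a+b$, then the spheres $S(q,a)$ and $S(q',b)$ intersect. I would prove it by a connectedness / intermediate value argument. Assume $a,b>0$; if either is zero, the triangle inequality forces $c$ to equal the other radius and the point $q$ or $q'$ works. The sphere $S(q,a)$ is a closed curve homeomorphic to a circle, since the unit ball is a convex body in the plane. Consider the continuous function $g(p)=\|p-q'\|$ on $S(q,a)$.
\begin{itemize}
\item At $p_+=q+\frac{a}{c}(q'-q)$ we get $g(p_+)=|c-a|$, which is $\le b$ because $|a-b|\le c$ (split into the cases $a\le c$ and $a>c$).
\item At $p_-=q-\frac{a}{c}(q'-q)$ we get $g(p_-)=a+c\ge b$, because $b\le a+c$.
\end{itemize}
Since the sphere is path-connected, $g$ attains the value $b$ somewhere, which gives the required $p$. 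If $c=0$, the triangle inequality forces $a=b$, and any $p\in S(q,a)$ works. In that case $f$ is non-injective, which the statement allows.

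Two-dimensionality enters only through the need for the sphere $S(q,a)$ to be connected, so the lemma in fact holds in any normed space of dimension $\ge 2$. The reason treewidth $2$ suffices is that each new vertex is constrained by at most two already-placed neighbours. The remaining care is in the normalization step: the running intersection property must guarantee that when a bag is processed, its new vertex has all of its earlier edges going to the two shared vertices. That is immediate for the 2-tree construction, since every edge lies in some bag.
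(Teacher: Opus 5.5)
Your proposal is correct and follows the paper's route: you pass to a $2$-tree via the partial-$2$-tree characterization, place vertices inductively along the construction order, and show the two spheres meet using connectedness of $S(q,a)$ and the intermediate value theorem at $q\pm\frac{a}{c}(q'-q)$, exactly as in the paper's sphere-intersection lemma. One small slip in your first bullet: the bound $g(p_+)=|c-a|\le b$ needs $c\le a+b$ when $a\le c$ and $a-b\le c$ when $a>c$, so it uses both hypotheses, not only $|a-b|\le c$.
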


\begin{remark}
    The condition $\tw(G)\leq 2$ in \cref{thm:tw2} is best possible for $G$-local isometry, even if the target is the infinite-dimensional space $\ell_2$. For a counterexample with treewidth $3$, let $X=\{0,1,2,3\}$ and define the metric $d$ by
\[
d(0,i)=1 \quad (i=1,2,3),
\qquad
d(i,j)=2 \quad (1\le i<j\le 3).
\]
This is the shortest--path metric of a star with center $0$. Let $G=K_4$, so $G$-local distortion
coincides with ordinary distortion. Let $f:X\to \ell_2$ have distortion $D$. After rescaling and translation, assume $f(0)=0$ and
\[
\|f(i)\|_2\le D \quad (i=1,2,3),
\qquad
\|f(i)-f(j)\|_2\ge 2 \quad (1\le i<j\le 3).
\]
The upper bounds give
\[
\sum_{1\leq i<j \leq 3}\|f(i)-f(j)\|_2^2
=
3\sum_{i=1}^3 \|f(i)\|_2^2
-
\Bigl\|\sum_{i=1}^3 f(i)\Bigr\|^2
\le
3\sum_{i=1}^3 \|f(i)\|_2^2
\le
9D^2,
\]
while the lower bounds give
\[
\sum_{i<j}\|f(i)-f(j)\|_2^2 \ge 3\cdot 2^2 = 12.
\]
Thus $12\le 9D^2$, so $D\ge 2/\sqrt{3}$.
\end{remark}

\begin{question}
In the previous remark, we showed that $\mathrm{tw}(G)\leq 2$ is best possible for a $G$-local isometry into $\ell_2$. It is an interesting question whether arbitrary metric spaces can be $G$-locally embedded into $\ell_2$ with distortion depending only on $\tw(G)$ (rather than on the number of points). In other words, is there a ``treewidth version'' of Bourgain's embedding theorem?    
\end{question}

The proofs of both of these theorems will rely on an equivalent characterization of treewidth in terms of $k$-trees. 

\begin{definition}[\(k\)-tree]\label[definition]{def:ktree}
Fix an integer \(k\ge 1\).
A graph \(H\) is a {\(k\)-tree} if it can be constructed by the following process:
\begin{enumerate}[label=(\roman*)]
\item Start with a clique on \(k+1\) vertices.
\item Repeatedly add a new vertex \(v\) whose neighborhood among the already-present vertices
is exactly a clique of size \(k\) (i.e.\ \(v\) is made adjacent to all vertices of some existing \(k\)-clique).
\end{enumerate}
\end{definition}

\begin{definition}[Partial \(k\)-tree]\label{def:partialktree}
A graph \(G\) is a {partial \(k\)-tree} if \(G\) is a subgraph of some \(k\)-tree on the same vertex set.
\end{definition}

The following equivalence is classical. 

\begin{theorem}[see, e.g.,~{\cite[Chapter~12]{diestel2025graph}}]
    For any integer $k\geq 1$ and any graph $G$, $\tw(G)\leq k$ if and only if $G$ is a partial $k$-tree. 
\end{theorem}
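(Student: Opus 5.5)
The equivalence is classical and we only sketch how we would prove it, treating the two directions separately. Throughout we assume $|V(G)|\ge k+1$; otherwise no $k$-tree exists on the same vertex set, and that degenerate case is handled by convention.

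\emph{Partial $k$-tree $\Rightarrow$ $\tw\le k$.} Since a tree decomposition of a graph $H$ is also a tree decomposition of any spanning subgraph of $H$, it suffices to treat a $k$-tree $H$. We build a width-$k$ decomposition along the construction order of $H$. The initial $(k+1)$-clique gets a single bag. When a new vertex $v$ is attached to an existing $k$-clique $K$, we use the standard fact that every clique of a graph lies inside a single bag of any tree decomposition. This follows from the Helly property for subtrees of a tree: the subtrees $T_u$, $u\in K$, pairwise intersect because each edge lies in some bag. So choose a node $t$ with $K\subseteq B_t$, and attach a new leaf $t'$ to $t$ with $B_{t'}=K\cup\{v\}$. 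Properties (i) and (ii) of \cref{def:tw-bags} are clear. Running intersection holds because $v$ occurs only in $t'$ and every other vertex of $B_{t'}$ also lies in the adjacent bag $B_t$. Every bag has size $k+1$, so the width is $k$.

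\emph{$\tw\le k$ $\Rightarrow$ partial $k$-tree.} Take a decomposition $(T,\mathcal B)$ of width at most $k$. We first normalize it so that every bag has exactly $k+1$ vertices and adjacent bags satisfy $|B_s\cap B_t|=k$. This takes three steps.
\begin{enumerate}[label=(\alph*)]
\item Contract every tree edge $st$ with $B_s\subseteq B_t$ into a single node with bag $B_t$. This preserves all three axioms and leaves no bag contained in a neighbouring bag.
\item While some bag has fewer than $k+1$ vertices, enlarge it by a vertex of a neighbouring bag not already in it; such a vertex exists by the previous step. Adding $u\in B_t$ to a neighbouring bag $B_s$ keeps the subtree of $u$ connected. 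Because $n\ge k+1$ and $T$ is connected, this process ends with all bags of size exactly $k+1$.
\item For adjacent bags $B_s\neq B_t$ of equal size $k+1$ with $|B_s\setminus B_t|=m>1$, subdivide the edge $st$ by a path of $m-1$ new bags. Each bag is obtained from the previous one by swapping one vertex of $B_s\setminus B_t$ for one of $B_t\setminus B_s$. Every vertex then appears along a contiguous stretch of the path, so running intersection is preserved, and consecutive bags share exactly $k$ vertices.
\end{enumerate}
Any adjacent bags that become equal during this process are merged.

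Now let $H$ be the graph on $X$ obtained by making every bag a clique, so $G\subseteq H$ by property (ii). We check that $H$ is a $k$-tree. Root $T$ at $r$ and process the nodes in BFS order. Start with the $(k+1)$-clique $B_r$. When we reach a child $t$ of an already processed node $s$, the bag $B_t$ adds exactly one vertex $v\notin B_s$. By running intersection, $v$ appears in no processed bag, since those bags lie on the $s$-side of the tree edge $st$. So $v$ is new, and its neighbours among the present vertices are exactly the $k$-clique $B_s\cap B_t$. To see this, note that any earlier bag containing both $v$ and some $u$ would violate the connectivity of $T_v$. Hence $H$ is produced by the process of \cref{def:ktree}, and $G$ is a partial $k$-tree.

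The main obstacle is the normalization, in particular verifying that padding and subdividing preserve running intersection. The rest is bookkeeping.
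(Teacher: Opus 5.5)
The paper does not prove this equivalence; it only cites Diestel, so there is no in-paper argument to compare with. Your route is the standard one: a new leaf bag for each added vertex in one direction, and in the other, normalizing a width-$k$ decomposition to full bags with adjacent bags differing in one vertex, then reading off a $k$-tree by BFS. It is essentially correct, but the justification of step (b) is wrong as stated.

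You say a small bag always has a neighbouring bag containing a vertex not in it ``by the previous step''. The property produced by (a), that no bag is contained in a neighbouring bag, is not preserved by (b). Take $k=3$, $G$ the path with edges $\{1,2\},\{2,3\},\{3,4\}$, and the path decomposition $s$--$t$--$r$ with $B_s=\{1,2\}$, $B_t=\{2,3\}$, $B_r=\{3,4\}$; this satisfies the conclusion of (a). After adding $3$ to $B_s$ you have $B_s=\{1,2,3\}\supseteq B_t$. So $B_s$ is still too small, yet its only neighbour offers nothing new, and the rule ``enlarge it by a vertex of a neighbouring bag'' is stuck at $s$.

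The fix is to choose the pair more carefully. If $|B_{s_0}|<k+1$, pick $w\notin B_{s_0}$ (possible since $n\ge k+1$) and a node containing $w$. On the tree path from $s_0$ to that node, let $t$ be the first node with $B_t\not\subseteq B_{s_0}$, and let $p$ be its predecessor. Then $|B_p|\le |B_{s_0}|<k+1$ and $B_t\setminus B_p\neq\emptyset$. So some small bag can always be enlarged from a neighbour, and termination follows because the total bag size strictly increases and is bounded.

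A smaller slip is in the BFS step. Earlier bags contain no copy of $v$ at all, so ``earlier bags containing $v$'' is not the right reason. The correct argument is this: every bag containing $v$ lies on the $t$-side of $st$, by connectivity of $T_v$ together with $v\notin B_s$. A present vertex $u$ has $T_u$ meeting the $s$-side. Hence connectivity of $T_u$ forces $u\in B_s\cap B_t$ whenever $u$ and $v$ share a bag.

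Finally, your remark about $|V(G)|\le k$ is correct. With the paper's requirement ``on the same vertex set'', the equivalence is in fact false for such small graphs. This is harmless for the paper's applications, since those cases of \cref{thm:twk-euclid} and \cref{thm:tw2} are trivial.
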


\begin{proof}[Proof of \cref{thm:twk-euclid}]
Since \(\tw(G)\le k\), \(G\) is a partial \(k\)-tree.
Thus there exists a \(k\)-tree \(H=(X,E_H)\) such that \(E\subseteq E_H\).
It suffices to construct an \(H\)-local isometry $f: X \to \mb{R}^k$. 
Fix a construction order \(v_1,\dots,v_n\) of the \(k\)-tree \(H\) as in \cref{def:ktree}:
the first \(k+1\) vertices form a \((k+1)\)-clique, and each subsequent vertex \(v_i\) (for \(i>k+1\))
has its earlier neighbors forming a \(k\)-clique \(S_i\).

We will construct the $H$-local isometry \(f\) by induction on \(i\).

\emph{Base case.}
Let \(C_k=\{v_1,\dots,v_{k+1}\}\).
In the given Euclidean realization \(\iota\) of the metric space $(X,d)$, the set \(\iota(C_k)\subseteq\elltwo\) lies in an affine subspace
of dimension at most \(k\) (since it has \(k+1\) points).
Choose an isometry from that affine subspace into \(\R^k\) and define \(f\) on \(C_k\) accordingly.
Then for all \(u,v\in C_k\), \(\norm{f(u)-f(v)}_2=d(u,v)\), so that $f$ is an $H$-local isometric embedding of $(C_k, d)$ into $\mb{R}^k$. 

\emph{Inductive step.}
Let $i > k+1$. Assume that \(f\) has already been defined on \(C_{i-1} = \{v_1,\dots,v_{i-1}\}\) and that $f$ is an $H$-local isometric embedding of $(C_{i-1}, d)$ into $\mb{R}^k$. 

Let \(v_i\) be the next vertex and let \(S_i=\{u_1,\dots,u_k\}\subseteq\{v_1,\dots,v_{i-1}\}\) be its
neighbor \(k\)-clique (so \(\{v_i,u_j\}\in E_H\) for all \(j \in [k]\)).

Consider the \((k+1)\)-point set \(S_i\cup\{v_i\}\) in \(\elltwo\) via \(\iota\).
These \(k+1\) points lie in an affine subspace of dimension at most \(k\); hence there exists a map
\(\psi_i:S_i\cup\{v_i\}\to\R^k\) such that
\(\norm{\psi_i(a)-\psi_i(b)}_2=d(a,b)\) for all \(a,b\in S_i\cup\{v_i\}\).

Moreover, for $a,b \in S_i$, it follows from the inductive hypothesis that
\[\|f(a) - f(b)\|_2 = d(a,b).\]

It follows that the labeled configurations \(f(S_i)\) and \(\psi_i(S_i)\) of $k$ points are congruent in \(\R^k\),
so there exists a Euclidean isometry (a composition of an orthogonal map and a translation) \(T_i:\R^k\to\R^k\) 
such that
\[
T_i\bigl(f(a)\bigr)=\psi_i(a)\qquad \text{for each }a \in S_i.
\]
Extend $f$ to $C_i = \{v_1,\dots, v_i\}$ by defining
\[
f(v_i)\ :=\ T_i^{-1}\bigl(\psi_i(v_i)\bigr).
\]
Then for every \(u_j\in S_i\),
\[
\norm{f(v_i)-f(u_j)}_2
= \norm{T_i(f(v_i)) - T_i(f(u_j))}_2
= \norm{\psi_i(v_i)-\psi_i(u_j)}_2
= d(v_i,u_j),
\]
so all new edges \(\{v_i,u_j\}\) are realized isometrically. This completes the inductive step.
\end{proof}

The proof of \cref{thm:tw2} follows the same template as the proof of \cref{thm:twk-euclid}. We will also need the following simple geometric lemma.  

\begin{lemma}\label[lemma]{lem:sphere-intersection-normed-plane}
Let $(V,\|\cdot\|)$ be a normed space of dimension at least $2$.
Fix $a,b\in V$ and radii $r_1,r_2\ge 0$, and set $t:=\|a-b\|$.
Then the spheres
\[
S(a,r_1):=\{x\in V:\|x-a\|=r_1\},
\qquad
S(b,r_2):=\{x\in V:\|x-b\|=r_2\}
\]
intersect if and only if
\[
|r_1-r_2|\ \le\ t\ \le\ r_1+r_2.
\]
\end{lemma}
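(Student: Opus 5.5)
The plan is to prove the two directions separately. Necessity is just the triangle inequality. Sufficiency uses a continuity (intermediate value) argument along a path that avoids the center $a$.

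\emph{Necessity.} Suppose $x\in S(a,r_1)\cap S(b,r_2)$. Then $t=\|a-b\|\le \|a-x\|+\|x-b\|=r_1+r_2$. Also $r_1=\|x-a\|\le \|x-b\|+\|b-a\|=r_2+t$, and symmetrically $r_2\le r_1+t$. Together these give $|r_1-r_2|\le t$.

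\emph{Sufficiency.} Assume $|r_1-r_2|\le t\le r_1+r_2$. First dispose of degenerate cases.

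If $t=0$, then $a=b$ and $r_1=r_2$. In that case any point of $S(a,r_1)$ works; the sphere is nonempty because $V\neq\{0\}$.

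If $r_1=0$, then the hypothesis forces $t=r_2$, so $x=a$ lies in both spheres. The case $r_2=0$ is symmetric.

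So assume $t>0$ and $r_1>0$. Consider the continuous function $\phi(x)=\|x-b\|$ restricted to the sphere $S(a,r_1)$. The goal is to show that $\phi$ takes the value $r_2$ somewhere on this sphere.

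Let $u=(b-a)/t$ be the unit vector pointing from $a$ toward $b$. Consider the two antipodal points $p_\pm=a\pm r_1u$, both of which lie on $S(a,r_1)$.

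At $p_-$ the value is $\phi(p_-)=\|{-r_1u}-tu\|=r_1+t$. Since $r_2\le r_1+t$, we get $\phi(p_-)\ge r_2$.

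At $p_+$ the value is $\phi(p_+)=|r_1-t|$. We check $|r_1-t|\le r_2$: if $r_1\ge t$ this reads $r_1-t\le r_2$, which is part of $|r_1-r_2|\le t$; if $r_1<t$ it reads $t-r_1\le r_2$, which is $t\le r_1+r_2$. So $\phi(p_+)\le r_2$.

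\emph{Main step.} We now need a continuous path inside $S(a,r_1)$ from $p_+$ to $p_-$; the intermediate value theorem then yields a point with $\phi=r_2$. This is where $\dim V\ge 2$ enters, and it is the only real obstacle, since a general norm has no rotations.

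To build the path, pick $w\in V$ linearly independent of $u$; this is possible because $\dim V\ge 2$. Define
\[
\gamma(s)=a+r_1\,\frac{\cos(s)\,u+\sin(s)\,w}{\|\cos(s)\,u+\sin(s)\,w\|},\qquad s\in[0,\pi].
\]
The vector $\cos(s)u+\sin(s)w$ is never zero because $u$ and $w$ are independent, so the normalization is continuous and $\gamma$ is a continuous map into $S(a,r_1)$. At the endpoints, $\gamma(0)=a+r_1u=p_+$ and $\gamma(\pi)=a-r_1u=p_-$.

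Finally, $s\mapsto\phi(\gamma(s))$ is continuous on $[0,\pi]$, takes a value at most $r_2$ at $s=0$ and at least $r_2$ at $s=\pi$. By the intermediate value theorem it equals $r_2$ at some $s$, and $\gamma(s)$ is the required intersection point.
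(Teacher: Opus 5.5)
Your proof is correct and follows essentially the same route as the paper. Necessity is the triangle inequality. Sufficiency is an intermediate-value argument for $\phi(x)=\|x-b\|$ on $S(a,r_1)$, using the same two antipodal test points $a\pm r_1 u$. The one difference is that you explicitly build a path in $S(a,r_1)$ by normalizing $\cos(s)u+\sin(s)w$, whereas the paper simply asserts that the sphere is path-connected when $\dim V\ge 2$.
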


\begin{proof}
The ``only if'' direction is immediate: if $x\in S(a,r_1)\cap S(b,r_2)$, then
\[
t=\|a-b\|\le \|a-x\|+\|x-b\|=r_1+r_2
\]
and by the reverse triangle inequality,
\[
t=\|a-b\|\ge \bigl|\|a-x\|-\|x-b\|\bigr| = |r_1-r_2|.
\]

For the converse, assume $|r_1-r_2|\le t\le r_1+r_2$.
By translation, we may assume $a=0$ and $b=w$, where $\|w\|=t$.
If $t=0$, then the assumed inequalities imply $r_1=r_2$, and the conclusion is immediate. We may therefore assume that $t>0$.
Consider the sphere $S(0,r_1)$ and the continuous function $\phi(u):=\|u-w\|$ on $S(0,r_1)$.

First note that $S(0,r_1)$ is connected (indeed path-connected) since $\dim V\ge 2$.
Hence $\phi(S(0,r_1))$ is a connected subset of $\R$, and therefore an interval.

For every $u\in S(0,r_1)$, the triangle inequality gives
\[
\phi(u)=\|u-w\|\le \|u\|+\|w\|=r_1+t,
\]
and the reverse triangle inequality gives
\[
\phi(u)=\|u-w\|\ge \bigl|\|w\|-\|u\|\bigr|=|t-r_1|.
\]
Both bounds are achieved by the points
\[
u_+:=\frac{r_1}{t}w,\qquad u_-:=-\frac{r_1}{t}w
\]
satisfy $\|u_\pm\|=r_1$ and
\[
\phi(u_+)=\left\|\left(\frac{r_1}{t}-1\right)w\right\|=|r_1-t|,
\qquad
\phi(u_-)=\left\|\left(-\frac{r_1}{t}-1\right)w\right\|=r_1+t.
\]
Therefore $\phi(S(0,r_1))$ is an interval containing both endpoints, hence
\[
\phi(S(0,r_1))=[\,|t-r_1|,\ t+r_1\,].
\]
Since $r_2\in[|t-r_1|,t+r_1]$ by the assumed inequalities, there exists $u\in S(0,r_1)$ with
$\|u-w\|=r_2$, i.e.\ $u\in S(0,r_1)\cap S(w,r_2)$, which translates back to
$S(a,r_1)\cap S(b,r_2)\neq\emptyset$.
\end{proof}

We can now prove \cref{thm:tw2}.

\begin{proof}[Proof of \cref{thm:tw2}]
Since $\tw(G)\le 2$, $G$ is a partial $2$-tree.
Thus there exists a $2$-tree $H=(X,E_H)$ with $E\subseteq E_H$.
It suffices to construct an $H$-local isometry into $(V,\|\cdot\|)$.

We induct on the construction of the $2$-tree $H$.
A $2$-tree starts with a triangle and repeatedly adds a new vertex adjacent to both endpoints of an existing edge.

\emph{Base step.}
Let the initial triangle be on vertices $\{a,b,c\}$.
Set $f(a)=0$.
Choose any vector $u\in V$ with $\|u\|=1$ and set $f(b)=d(a,b)\,u$.
We now place $f(c)$ so that
\[
\|f(c)-f(a)\|=d(c,a)\quad\text{and}\quad \|f(c)-f(b)\|=d(c,b).
\]
By \cref{lem:sphere-intersection-normed-plane} (with centers $f(a),f(b)$ and radii $d(c,a),d(c,b)$),
such a point exists because the triangle inequalities for the metric $d$ imply
\[
|d(c,a)-d(c,b)|\le d(a,b)\le d(c,a)+d(c,b).
\]
Choose any such point and call it $f(c)$.
Then all three edges of the initial triangle are realized isometrically.

\emph{Inductive step.}
Assume we have embedded the current $2$-tree $H'\subseteq H$ isometrically on edges:
\[
\|f(u)-f(v)\|=d(u,v)\qquad \forall \{u,v\}\in E(H').
\]
Let $H''$ be obtained from $H'$ by adding a new vertex $x$ adjacent to both endpoints $u,v$ of some existing edge
$\{u,v\}\in E(H')$.
We must choose $f(x)\in V$ such that
\[
\|f(x)-f(u)\|=d(x,u)\qquad\text{and}\qquad \|f(x)-f(v)\|=d(x,v).
\]
Since $\|f(u)-f(v)\|=d(u,v)$ by the inductive hypothesis, the triangle inequalities for $(x,u,v)$ yield
\[
|d(x,u)-d(x,v)|\le d(u,v)\le d(x,u)+d(x,v).
\]
Hence the spheres $S(f(u),d(x,u))$ and $S(f(v),d(x,v))$ intersect by \cref{lem:sphere-intersection-normed-plane}.
Choose any point $p$ in the intersection and set $f(x)=p$.
Then the new edges $\{x,u\}$ and $\{x,v\}$ are realized isometrically, and all previously realized edges remain unchanged.

Proceeding through the construction of $H$ defines $f$ on all vertices so that every edge of $H$
has the correct norm length; since $E\subseteq E(H)$, the same holds for all edges of $G$.
\end{proof}

\section{Lower bounds for graphs of bounded maximum degree}
\label{sec:lower_bdd}

In light of the equilateral space example (\cref{example:equilateral}) and anticipating an open problem (\cref{q:aspect-ratio}), we will track the \emph{aspect ratio} of the metric spaces appearing in the statements of our results. 

\begin{definition}
    \label{def:aspect-ratio}
Let $(X,d_X)$ be a finite metric space. The aspect ratio of $X$ is defined to be the ratio of the maximum to minimum distance in $X$, i.e.~
\[A_X := \frac{\max_{x\neq y}d_X(x,y)}{\min_{x\neq y}d_X(x,y)}.\]
\end{definition}

Our main result is the following. 

\begin{theorem}
\label{thm:gen-lb}
Let $(X,d_X)$ be a metric space with $|X| = n$ and aspect ratio $A_X$. 

Suppose for a metric space $(Y,d_Y)$, there exists a real number $D \geq 1$ such that any embedding $f: (X,d_X) \to (Y,d_Y)$ has distortion at least $D$. Then, for any $\delta \in (0,1/100)$, there exists a metric space $(Z,d_Z)$ -- with $|Z| = O(n^2)$ and aspect ratio $A_Z = O(A_X\cdot D^2 \log{n}/\delta)$ -- and a graph $G = (Z,E)$ with maximum degree $3$ such that for any $h : (Z,d_Z) \to (Y,d_Y)$, the $G$-local distortion of $h$ is at least $D-\delta$.

Moreover, if $(X,d_X)$ is an $\ell_p$-metric space for $p \in [1,\infty]$, then $(Z,d_Z)$ can be taken to be an $\ell_p$ metric space as well. 

\end{theorem}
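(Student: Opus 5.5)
The plan is to reduce to the hypothesis on $X$ by building, for every pair $\{x,y\}$, a \emph{direct edge} of $G$ whose endpoints are tiny perturbations of $x$ and $y$. Degree stays bounded because each point $x\in X$ is replaced by a small binary tree of copies of $x$.

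\textbf{Construction.} Fix a scale $s>0$, to be chosen as
\[
s \asymp \frac{\delta\, d_{\min}}{D^2\log n},\qquad d_{\min}:=\min_{x\neq y}d_X(x,y).
\]
For each $x\in X$, take a rooted binary tree $T_x$ with $n-1$ leaves $\{\ell_x^y : y\neq x\}$ and depth $O(\log n)$. Let $Z$ consist of points $x_u$, one for each $x\in X$ and each node $u\in T_x$, with $x_{\mathrm{root}}$ identified with $x$. The edges of $G$ are:
\begin{itemize}
\item[(a)] the tree edges of each $T_x$, at scale $s$;
\item[(b)] for each pair $x\neq y$, one ``long'' edge $\{x_{\ell_x^y},\,y_{\ell_y^x}\}$.
\end{itemize}
Each internal node has degree at most $3$, and each leaf has degree $2$, so $G$ has maximum degree at most $3$.

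\textbf{Metric on $Z$.} For a general metric, set
\begin{align*}
d_Z(x_u,x_w) &:= s\cdot d_{T_x}(u,w),\\
d_Z(x_u,y_w) &:= d_X(x,y)+s\bigl(\mathrm{depth}(u)+\mathrm{depth}(w)\bigr)\quad (x\neq y).
\end{align*}
The triangle inequality is a routine case check: this is the metric obtained by gluing the trees $T_x$ to $X$ at their roots. The size is $|Z|=O(n^2)$. The minimum distance is $s$ and the maximum is $O(\max d_X)$, so
\[
A_Z=O\!\left(\frac{A_X D^2\log n}{\delta}\right).
\]

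\textbf{$\ell_p$ case.} Here I would instead set $x_u:=(x,\,s\,\psi(u))\in\ell_p\oplus_p\ell_p$, where $\psi(u)$ is the sum of distinct unit basis vectors along the root-to-$u$ path. Then $d_Z$ is an $\ell_p$ metric, adjacent tree nodes are at distance exactly $s$, and every distance equals the intended one up to an additive $O(s\log n)$. That additive error is all the argument below uses.

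\textbf{Main argument.} Suppose $h$ has $G$-local distortion $D'<D-\delta$ with scale $r$, and let $f:=h|_X$ (the roots). Any two nodes of the same tree $T_x$ are joined in $G$ by a path of at most $O(\log n)$ edges of length $s$. Hence, by the triangle inequality,
\[
d_Y\bigl(h(x),h(x_u)\bigr)\le D' r\cdot O(s\log n)\qquad\text{for all } u\in T_x.
\]
Combining this with the long edge $\{x_{\ell_x^y},y_{\ell_y^x}\}$ and its length $d_X(x,y)\pm O(s\log n)$ gives, for all $x\neq y$,
\[
r\bigl(d_X(x,y)-C D' s\log n\bigr)\ \le\ d_Y\bigl(f(x),f(y)\bigr)\ \le\ D' r\bigl(d_X(x,y)+C s\log n\bigr).
\]
With $s\log n\le c\,\delta\, d_{\min}/D^2$, both additive errors become relative errors. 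Since $D'\le D$, the lower error is at most a factor $1-O(\delta/D)$ and the upper error at most a factor $1+O(\delta/D^2)$. Therefore
\[
\operatorname{dist}(f)\ \le\ D'\cdot\frac{1+O(\delta/D^2)}{1-O(\delta/D)}\ \le\ D'+O(\delta)<D
\]
for a suitable choice of the constant $c$. This contradicts the hypothesis on $X$.

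The main obstacle is the lower bound on $d_Y(f(x),f(y))$. Paths in $G$ only give upper bounds, and a naive ``path of points from $x$ to $y$'' construction cannot prevent contraction. This is exactly why the direct long edges are needed, and why each correction term is multiplied by $D'$; that multiplication is what forces $s\sim\delta/(D^2\log n)$ and hence the stated aspect ratio. A secondary nuisance is making the $\ell_p$ realization reproduce distances only up to $O(s\log n)$ while keeping every tree edge of length at least $s$; the coordinate-concatenation embedding above handles both.
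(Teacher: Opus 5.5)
Your proof is correct and follows essentially the same approach as the paper. The shared structure is a binary tree of $s$-close copies hanging off each $x\in X$, one direct edge between leaves for each pair $\{x,y\}$, and the triangle inequality along the $O(\log n)$-edge tree paths. That transfers both the upper and lower bounds to $h|_X$ at an additive cost of $O(D s\log n)$, which is what forces $s\asymp \delta d_{\min}/(D^2\log n)$.

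The only difference is in how $Z$ is realized. The paper places the copies at translates $x+\alpha e_i$ in $\ell_\infty^{2n-2}$ (via the Fr\'echet embedding) and decomposes $K_n$ into matchings $M_1,\dots,M_q$, so that each cross edge $\{x+v_j,y+v_j\}$ has length exactly $d_X(x,y)$. You instead use the glued tree metric with a dedicated leaf for each partner, and you correctly absorb the resulting additive $O(s\log n)$ error in the cross-edge lengths.
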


\paragraph{\bf Remark} We record several remarks about the optimality of this result. 
\begin{enumerate}
    \item The bound of $3$ on the maximum degree of $G$ is the smallest possible. In \cref{prop:ub-deg2}, we show that for any finite metric space $(X,d_X)$ and any graph $G = (X,E)$ of maximum degree at most $2$, there exists a $G$-local isometric embedding $f: X \to (\mb{R}^2, \|\cdot \|)$ for any norm $\|\cdot \|$. On the other hand, as discussed earlier, there are examples of $(X,d_X)$ such that any embedding into $\ell_2$ incurs distortion $\Omega(\log n)$. 

    \item The bound $|Z| = O(n^2)$ cannot be improved in general. Indeed, Indyk and Wagner \cite[Theorem~6.2]{indyk2017near} showed that $\Omega(n^2 \log (1/\eps))$ bits are required to sketch finite metric spaces on $n$ points for which all distances are between $1$ and $2$, up to distortion $(1+\eps)$ (see \cite[Section~2]{indyk2017near} for formal definitions). Apply the construction in the proof of our theorem with $D=1+2\eps$ and $\delta=\eps$, and round each edge length upward to the nearest integer power of $(1+\eps)$. Let $\widetilde d$ be the shortest-path metric induced by these rounded edge lengths, and take $(Y,d_Y)=(Z,\widetilde d)$ and $h=\operatorname{id}_Z$. Then $h$ is a $G$-local $(1+\eps)$-embedding, and the estimates in the proof show that $\widetilde d|_X$ approximates $d_X$ within distortion $(1+2\eps)$. Thus the rounded edge lengths give a sketch for $(X,d_X)$ with distortion $(1+2\eps)$. Taking, say, $\eps=n^{-10}$, the bounded-degree graph $G$, the distinguished copy of $X$, and the rounded edge lengths can all be encoded using $O(|Z|\log n)$ bits (unless $|Z|\geq n^2$, in which case there is nothing to prove). This contradicts the lower bound in \cite{indyk2017near}, applied with accuracy parameter $2\eps$, unless $|Z| = \Omega(n^2)$.   

    \item On the other hand, we do not know of any obstruction showing that the bounds on the aspect ratio $A_Z$ are not improvable. In particular, it would be interesting if there is a construction with both $\Delta(G)$ and $A_Z$ depending only on $A_X, D, \delta$ (in particular, independent of $n$). See \cref{q:aspect-ratio}.
\end{enumerate}

\begin{proof}[Proof of \cref{thm:gen-lb}]
    Since $|X| = n$, it follows that $(X,d_X)$ can be isometrically embedded in $\ell_{\infty}^n$ (e.g., using the Fr\'echet embedding, see \cite{matouvsek2013lecture}); in particular, by composing with the canonical embedding $\ell_{\infty}^{n} \to \ell_{\infty}^{2n-2}$ (for $n\geq 2$), we may identify $X$ with a subset of points in $\ell_{\infty}^{2n-2}$. 

    Let $e_1,\dots, { e_{2n-2}}$ denote the standard basis of {$\ell_{\infty}^{2n-2}$}  and for {$i \in [2n-2]$}, let $v_i := \alpha \cdot e_i$, where $\alpha$ will be chosen later. Consider the following set of points in {$\ell_{\infty}^{2n-2}$}, 
    \[Z = X \cup \{x + v_i: x \in X, i \in [2n-2]\}\]
    and let $(Z,d_Z)$ be the metric space induced by the $\ell_\infty$ metric on these points. Note that $|Z| = O(n^2)$, as claimed. 

    We will construct a graph, $G = (Z, E)$ such that $(Z,d_Z)$ and $G$ will satisfy the conclusion of the theorem. The edges of $G$ are the union of the following two sets $E_1$ and $E_2$:

    $E_1$ is the union of $|X|$ disjoint complete binary trees, rooted at $x \in X$, where the different binary trees span the points $\{x, x+v_1,\dots, x+v_{2n-2}\}$ for different choices of $x \in X$. Note that each tree has $\lceil (2n-1)/2 \rceil = n$ leaves; we will assume that these leaves are $x+v_1,\dots, x+v_n$. These trees have height at most $L:=\lceil\log_2 n\rceil$. Note that in $E_1$, the roots have degree $2$, the leaves have degree $1$, and all other vertices have degree $3$.  

    We now describe the set $E_2$. Let $K_n$ denote the complete graph on $X$, and set $q=n-1$ if $n$ is even and $q=n$ if $n$ is odd. By Baranyai's Theorem \cite{Baranyai74}, the edges of $K_n$ can be decomposed into a disjoint union of $q$ matchings $M_1,\dots, M_q$ (perfect if $n$ is even and near-perfect if $n$ is odd; see \cref{fig:Baranyai} for an illustration). 

        \begin{figure}[h]
\begin{tikzpicture}

    \draw[fill=black] (-9, 3) circle (4pt);
     \node[left=3pt] at (-9, 3.25) {$w$};

    \draw[fill=black] (-7.5,3) circle (4pt);
    \node[right=3pt] at (-7.5,3.25) {$x$};

    \draw[fill=black] (-9,1.5) circle (4pt);
    \node[left=3pt] at (-9,1.25) {$z$};

    \draw[fill=black] (-7.5,1.5) circle (4pt);
    \node[right=3pt] at (-7.5,1.25) {$y$};

     \draw[fill=black] (-4, 2.75) circle (3pt);
     \draw[fill=black] (-3, 2.75) circle (3pt);
     \draw[fill=black] (-3, 1.75) circle (3pt);
     \draw[fill=black] (-4, 1.75) circle (3pt);
     \node[right=3pt] at (-4,1) {$M_2$};

      \draw[fill=black] (-1.5, 2.75) circle (3pt);
     \draw[fill=black] (-.5, 2.75) circle (3pt);
     \draw[fill=black] (-.5, 1.75) circle (3pt);
     \draw[fill=black] (-1.5, 1.75) circle (3pt);
     \node[right=3pt] at (-1.5,1) {$M_1$};

     \draw[fill=black] (1, 2.75) circle (3pt);
     \draw[fill=black] (2, 2.75) circle (3pt);
     \draw[fill=black] (2, 1.75) circle (3pt);
     \draw[fill=black] (1, 1.75) circle (3pt);
     \node[right=3pt] at (1,1) {$M_3$};

    \draw[line width=0.4mm] (-5.25, 2.5) -- (-5.75, 2.5);

    \draw[line width=0.4mm] (-5.25, 2.25) -- (-5.75, 2.25);

    \draw[line width=0.4mm] (-4, 2.75) -- (-4, 1.75);
     \draw[line width=0.4mm] (-3, 2.75) -- (-3, 1.75);

    \draw[line width=0.4mm] (-1.5, 2.75) -- (-.5, 1.75);
     \draw[line width=0.4mm] (-1.5, 1.75) -- (-.5, 2.75);

     \draw[line width=0.4mm] (1, 2.75) -- (2, 2.75);
     \draw[line width=0.4mm] (2, 1.75) -- (1, 1.75);

    \draw[line width=0.4mm] (-9, 3) -- (-7.5, 3);
     \draw[line width=0.4mm] (-9, 3) -- (-7.5, 1.5);
     \draw[line width=0.4mm] (-9, 3) -- (-9, 1.5);
     \draw[line width=0.4mm] (-7.5, 3) -- (-7.5, 1.5);
     \draw[line width=0.4mm] (-7.5, 3) -- (-9, 1.5);
     \draw[line width=0.4mm] (-9, 1.5) -- (-7.5, 1.5);

    \draw[line width=0.4mm] (-2.35,2) arc (-150:-30:0.2);
    \draw[line width=0.4mm] (.15,2) arc (-150:-30:0.2);
    \draw[line width=0.4mm] (-2.35, 2.25) -- (-2.35, 2);
     \draw[line width=0.4mm] (-2, 2.25) -- (-2, 2);
     \draw[line width=0.4mm] (.15, 2.25) -- (.15, 2);
     \draw[line width=0.4mm] (.485, 2.25) -- (.485, 2); 
\end{tikzpicture}

 \caption{$K_4$ can be decomposed into three disjoint perfect matchings $M_1, M_2,$ and $M_3$.}
 \label{fig:Baranyai}
\end{figure}

   We define  
    \[E_2 = \{\{x+v_j, y+v_j\}: \{x,y\} \in M_j \text{ for some } j\in [q]\}.\]
    In words, for each pair of points $\{x,y\}$ in $X$, let $M_j$ be the unique matching in which the edge $\{x,y\}$ appears. Then, $E_2$ contains an edge between $x+v_j$ and $y+v_j$. Clearly, $E_2$ is itself a matching, i.e.~has maximum degree $1$. Also, since $x+v_j$ is a leaf in $E_1$ for $j \in [n]$, it follows that $G = (Z, E)$ with $E = E_1 \cup E_2$ has maximum degree 3. See \cref{fig:binary_tree} for an illustration.  
    \begin{figure}[t]
\begin{tikzpicture}[scale=0.85]

\begin{scope}[xshift=4cm]

  \draw[red, line width=0.4mm] (-10, 3) -- (-11, 2);
  \draw[red, line width=0.4mm] (-10, 3) -- (-9, 2);
  \draw[red, line width=0.4mm] (-11, 2) -- (-11.5, 1);
  \draw[red, line width=0.4mm] (-11, 2) -- (-10.5, 1);
  \draw[red, line width=0.4mm] (-9, 2) -- (-9.5, 1);
  \draw[red, line width=0.4mm] (-9, 2) -- (-8.5, 1);

  \node[fill=black,circle,inner sep=2pt,label=above:$x$] (xroot) at (-10,3) {};
  \node[fill=black,circle,inner sep=1.5pt,label=left:$x+v_5$] at (-11,2) {};
  \node[fill=black,circle,inner sep=1.5pt,label={[label distance=0.5pt]above right :$x+v_6$}] at (-9,2) {};

  \node[fill=black,circle,inner sep=1.5pt,label=left:$x+v_1$] (xv3) at (-11.5,1) {};
  \node[fill=black,circle,inner sep=1.5pt] at (-10.5,1) {};
  \node[fill=black,circle,inner sep=1.5pt,label=below:$x+v_2$] (xv5) at (-9.5,1) {};
  \node[fill=black,circle,inner sep=1.5pt,label={[label distance=0.5pt]above right:$x+v_3$}] (xv6) at (-8.5,1) {};
\end{scope}

\begin{scope}[xshift=12cm]

  \draw[red, line width=0.4mm] (-10, 3) -- (-11, 2);
  \draw[red, line width=0.4mm] (-10, 3) -- (-9, 2);
  \draw[red, line width=0.4mm] (-11, 2) -- (-11.5, 1);
  \draw[red, line width=0.4mm] (-11, 2) -- (-10.5, 1);
  \draw[red, line width=0.4mm] (-9, 2) -- (-9.5, 1);
  \draw[red, line width=0.4mm] (-9, 2) -- (-8.5, 1);

  \node[fill=black,circle,inner sep=2pt,label=above:$w$] (wroot) at (-10,3) {};
  \node[fill=black,circle,inner sep=1.5pt,label={[label distance=0.5pt]above left :$w+v_6$}] at (-11,2) {};
  \node[fill=black,circle,inner sep=1.5pt,label=right:$w+v_5$] at (-9,2) {};

  \node[fill=black,circle,inner sep=1.5pt,label={[label distance=1pt]above left:$w+v_3$}] (wv6) at (-11.5,1) {};
  \node[fill=black,circle,inner sep=1.5pt,label={[label distance=0.5pt]below :$w+v_2$}]  (wv5) at (-10.5,1) {};
  \node[fill=black,circle,inner sep=1.5pt] at (-9.5,1) {};
  \node[fill=black,circle,inner sep=1.5pt,label=right:$w+v_1$] (wv3) at (-8.5,1) {};
\end{scope}

\begin{scope}[xshift=12cm,yshift=-1cm,yscale=-1]

  \draw[red, line width=0.4mm] (-10, 3) -- (-11, 2);
  \draw[red, line width=0.4mm] (-10, 3) -- (-9, 2);
  \draw[red, line width=0.4mm] (-11, 2) -- (-11.5, 1);
  \draw[red, line width=0.4mm] (-11, 2) -- (-10.5, 1);
  \draw[red, line width=0.4mm] (-9, 2) -- (-9.5, 1);
  \draw[red, line width=0.4mm] (-9, 2) -- (-8.5, 1);

  \node[fill=black,circle,inner sep=2pt,label=below:$y$] (yroot) at (-10,3) {};
  \node[fill=black,circle,inner sep=1.5pt,label={[label distance=1pt]below left:$y+v_5$}] at (-11,2) {};
  \node[fill=black,circle,inner sep=1.5pt,label=right:$y+v_6$] at (-9,2) {};

  \node[fill=black,circle,inner sep=1.5pt,label={[label distance=1pt]below left:$y+v_3$}] (yv6) at (-11.5,1) {};
  \node[fill=black,circle,inner sep=1.5pt,label={[label distance=0.5pt]above :$y+v_2$}] (yv5) at (-10.5,1) {};
  \node[fill=black,circle,inner sep=1.5pt] at (-9.5,1) {};
  \node[fill=black,circle,inner sep=1.5pt,label=right:$y+v_1$] (yv3) at (-8.5,1) {};
\end{scope}

\begin{scope}[xshift=4cm,yshift=-1cm,yscale=-1]

  \draw[red, line width=0.4mm] (-10, 3) -- (-11, 2);
  \draw[red, line width=0.4mm] (-10, 3) -- (-9, 2);
  \draw[red, line width=0.4mm] (-11, 2) -- (-11.5, 1);
  \draw[red, line width=0.4mm] (-11, 2) -- (-10.5, 1);
  \draw[red, line width=0.4mm] (-9, 2) -- (-9.5, 1);
  \draw[red, line width=0.4mm] (-9, 2) -- (-8.5, 1);

  \node[fill=black,circle,inner sep=2pt,label=below:$z$] (zroot) at (-10,3) {};
  \node[fill=black,circle,inner sep=1.5pt,label=left:$z+v_5$] at (-11,2) {};
  \node[fill=black,circle,inner sep=1.5pt,label={[label distance=0.5pt]below right :$z+v_6$}] at (-9,2) {};

  \node[fill=black,circle,inner sep=1.5pt,label=left:$z+v_1$] (zv3) at (-11.5,1) {};
  \node[fill=black,circle,inner sep=1.5pt] at (-10.5,1) {};
  \node[fill=black,circle,inner sep=1.5pt,label={[label distance=0.5pt]above :$z+v_2$}]  (zv5) at (-9.5,1) {};
  \node[fill=black,circle,inner sep=1.5pt,label={[label distance=0.5pt]below right :$z+v_3$}] (zv6) at (-8.5,1) {};
\end{scope}

\draw[cyan,thick] (wv3) -- (yv3);
\draw[cyan,thick] (xv3) -- (zv3);
\draw[cyan,thick] (wv6) -- (xv6);
\draw[cyan,thick] (zv6) -- (yv6);
\draw[cyan,thick] (wv5) -- (zv5);
\draw[cyan,thick] (xv5) -- (yv5);

\end{tikzpicture}

    \caption{The construction of $G = (Z, E)$ for $X = \{w,x,y,z\}$. The edges in $E_2$, corresponding to the decomposition in \cref{fig:Baranyai}, are depicted in blue.}
    \label{fig:binary_tree}
\end{figure}

  To complete our construction, it remains to specify our choice of $\alpha$. Let $\gamma := \min_{x\neq y}d_X(x,y)$ be the minimum distance between two distinct points of $X$. We set $\alpha = \delta \gamma/(8D^2 L)$. With this choice of parameters, the assertion about the aspect ratio of $(Z, d_Z)$ is immediate. 
  Additionally, we have the following. 

    \begin{claim}
        \label[claim]{claim:point-cloud}
   Suppose the map $h: (Z,d_Z) \to (Y,d_Y)$ is a $G$-local $D'$-embedding with $D' \leq D$ (i.e.~satisfies \cref{eq:distortion-local} with some $r > 0$ and $1\leq D'\leq D$). Then, for any $x \in X$ and $j\in[n]$, 
    \[d_Y(h(x), h(x+v_j)) \leq r\cdot \delta/8D\cdot \gamma\]
    \end{claim}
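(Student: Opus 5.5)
The plan is to bound $d_Y(h(x),h(x+v_j))$ by following the path in the binary tree of $E_1$ rooted at $x$ from the root $x$ down to the leaf $x+v_j$, and applying the upper half of \cref{eq:distortion-local} to each edge along the way.

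First I compute the $d_Z$-length of every tree edge. Each edge of $E_1$ joins two points of the form $x+v_i$ and $x+v_k$, or $x$ and $x+v_i$, with $i\neq k$. Since the $\ell_\infty$ norm is used, $\|v_i-v_k\|_\infty=\alpha$ and $\|v_i\|_\infty=\alpha$. So every edge of $E_1$ has $d_Z$-length exactly $\alpha$.

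Next I apply the distortion bound edge by edge. For each $\{a,b\}\in E_1$, \cref{eq:distortion-local} gives
\[d_Y(h(a),h(b))\le D'\,r\,d_Z(a,b)= D' r\alpha\le D r\alpha.\]
The path from $x$ to $x+v_j$ in its tree has at most $L=\lceil\log_2 n\rceil$ edges, since the tree has height at most $L$. The triangle inequality in $Y$ along this path then gives
\[d_Y(h(x),h(x+v_j))\le L\cdot D r\alpha.\]

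Finally I substitute $\alpha=\delta\gamma/(8D^2L)$. This gives
\[L\cdot D r\cdot\frac{\delta\gamma}{8D^2L}=r\cdot\frac{\delta}{8D}\cdot\gamma,\]
which is exactly the claimed bound.

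There is no real obstacle here. The only points needing care are that every tree edge, including the edges at the root, has length exactly $\alpha$ in $\ell_\infty$, and that the path length is bounded by the height $L$.
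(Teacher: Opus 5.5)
Your proof is correct and follows essentially the same argument as the paper: a path of at most $L$ tree edges, each with $d_Z$-length $\alpha$, the upper distortion bound with $D'\le D$, the triangle inequality in $Y$, and substitution of $\alpha=\delta\gamma/(8D^2L)$. The paper only uses $d_Z\le\alpha$ on each edge rather than equality, which also covers the $\ell_p$ version in the ``moreover'' part, where the tree edges have length at most $\alpha$ rather than exactly $\alpha$.
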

    \begin{proof}
        By construction, there is a path from $x$ to $x+v_j$ of length at most $L$ in $E$. The edges of the path are of the form $\{x, x+v_k\}$ or $\{x+v_k, x+v_\ell\}$. Denoting the points along the path by $x = y_1, y_2, \dots, y_{s} = x+v_j$, we have
        \begin{align*}
            d_Y(h(x), h(x+v_j))
            &\leq d_Y(h(y_1), h(y_2)) + \dots + d_Y(h(y_{s-1}), h(y_{s}))\\
            &\leq r\cdot D\cdot \left(d_{Z}(y_1, y_2) + \dots + d_{Z}(y_{s-1}, y_{s})\right)\\
            &\leq r\cdot D \cdot (s-1)\cdot \alpha\\ & \leq r\cdot \delta/8D\cdot \gamma;
        \end{align*}
        here, the first inequality is the triangle inequality; the second inequality uses that $h$ is a $G$-local $D'$-embedding with $D' \leq D$; the third inequality uses $d_{Z}(y_k, y_{k+1}) \leq \alpha$, which is true by construction; and the last inequality uses our setting of $\alpha$ along with $s-1 \leq L$. 
    \end{proof}
    With this claim in hand, we proceed with the proof of the theorem. Suppose for contradiction that there exists $h: (Z,d_Z) \to (Y,d_Y)$ which is a $G$-local $D'$-embedding with $D' < D-\delta$. Let $f = h|_X : (X,d_X) \to (Y,d_Y)$ denote the restriction of $h$ to $X$. We will show that the distortion of $f$ is strictly less than $D$, thereby contradicting our assumption. 

For $x, y \in X$, $x \neq y$, let $j\in [q]$ be such that $\{x+v_j, y+v_j\}\in E_2\subseteq E$; recall that such a value of $j$ is guaranteed to exist by construction. Then,
\begin{align*}
    d_Y(f(x), f(y)) &= d_Y(h(x), h(y))\\
    &\leq d_Y(h(x), h(x+v_j)) + d_Y(h(x + v_j), h(y+v_j)) + d_Y(h(y+v_j), h(y))\\
    &\leq r\cdot (D-\delta) \cdot d_{Z}(x+v_j, y+v_j) + 2\cdot r\cdot \delta/8D\cdot \gamma\\
    &= r\cdot (D-\delta) \cdot d_{Z}(x, y) + r\cdot \delta/4D\cdot \gamma\\
     &= r\cdot (D-\delta) \cdot d_{X}(x, y) + r\cdot \delta/4D\cdot \gamma \\
    &\leq r\cdot \left(D-\delta+\frac{\delta}{4D}\right)\cdot d_X(x,y);
\end{align*}
here, the second line is the triangle inequality; the third line follows from \cref{claim:point-cloud} and the assumption that $h$ is a $G$-local $(D-\delta)$-embedding; the fourth line follows since the metric on $Z$ is a norm; and the last line follows from the definition of $\gamma$. 

Similarly, we have
\begin{align*}
    d_Y(f(x), f(y)) &= d_Y(h(x), h(y))\\
    &\geq -d_Y(h(x), h(x+v_j)) + d_Y(h(x + v_j), h(y+v_j)) - d_Y(h(y+v_j), h(y))\\
    &\geq r\cdot d_{Z}(x + v_j, y +v_j) -  2\cdot r\cdot \delta/8D\cdot \gamma \\
    &= r\cdot d_X(x,y) - r\cdot \delta/4D\cdot \gamma \\
    &\geq r(1-\delta/4D)\cdot d_X(x,y).
\end{align*}

Combining the above two equations shows that the distortion of $f$ is bounded above by $\frac{D-\delta+\delta/(4D)}{1-\delta/(4D)}$, which is strictly less than $D$ since $D\geq 1$; this gives us the desired contradiction. 

Finally, for the ``moreover'' part, if $(X,d_X)$ is an $\ell_p$ metric space to start with, then we can view $X$ as a subset of points in $\ell_p^{N}$ for $N = \binom{n}{2}$ (see,~e.g.~\cite[Proposition~1.4.2]{matouvsek2013lecture}), then canonically as a subset of $\ell_p^{N+2n-2}$, and repeat the proof above with $v_i=(\alpha/2)e_{N+i}$. Since $\|v_i-v_j\|_p\leq\alpha$, all the preceding estimates, including the asserted aspect-ratio bound, remain valid up to absolute constants.
\end{proof}

\subsection{Graphs of maximum degree two} As remarked after the statement of \cref{thm:gen-lb}, one cannot prove the theorem with graphs of maximum degree $2$. This follows immediately from \cref{thm:tw2}.

\begin{proposition}
\label[proposition]{prop:ub-deg2}
Let $(X,d)$ be a finite metric space and let $G=(X,E)$ be a graph of maximum degree at most $2$.
Let $(\mathbb{R}^2,\|\cdot\|)$ be any two-dimensional normed space.
Then there exists a map $f:X\to\mathbb{R}^2$ such that
\[
\|f(x)-f(y)\| = d(x,y)\qquad \forall \{x,y\}\in E.
\]
\end{proposition}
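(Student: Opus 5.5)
The plan is to reduce \cref{prop:ub-deg2} directly to \cref{thm:tw2} by showing that every graph of maximum degree at most $2$ has treewidth at most $2$.

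\emph{Step 1: structure of $G$.} A graph of maximum degree at most $2$ is a disjoint union of connected components, each of which is an isolated vertex, a path, or a cycle $C_m$ with $m\ge 3$. This is standard: in a connected graph with all degrees at most $2$, walking from any vertex either closes up into a cycle or terminates at two degree-one endpoints.

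\emph{Step 2: treewidth of each component.} For a path $u_1,\dots,u_m$, take the path decomposition with bags $\{u_i,u_{i+1}\}$; it has width $1$. For a cycle $u_1,\dots,u_m,u_1$, take bags $B_i=\{u_1,u_i,u_{i+1}\}$ for $i=2,\dots,m-1$, arranged along a path $T$. We check the axioms of \cref{def:tw-bags}:
\begin{enumerate}[label=(\alph*)]
\item every edge $\{u_i,u_{i+1}\}$ lies in $B_i$;
\item the edges $\{u_1,u_2\}$ and $\{u_m,u_1\}$ lie in $B_2$ and $B_{m-1}$ respectively;
\item $u_1$ appears in every bag, and each other $u_i$ appears only in the consecutive bags $B_{i-1},B_i$, so the running intersection property holds.
\end{enumerate}
This gives width $2$, consistent with the example stating $\tw(C_m)=2$. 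Isolated vertices have width $0$.

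\emph{Step 3: combine components.} Join the tree decompositions of the components by adding tree edges between arbitrary nodes of different component trees, producing a single tree. Since distinct components share no vertices, all three axioms are preserved, and the width is the maximum over the components, which is at most $2$. Hence $\tw(G)\le 2$.

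\emph{Step 4: conclude.} Apply \cref{thm:tw2} with $V=(\mathbb{R}^2,\|\cdot\|)$ to obtain $f$ with $\|f(x)-f(y)\|=d(x,y)$ for all $\{x,y\}\in E$.

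There is no real obstacle. The only mild care needed is in combining components, which could alternatively be handled by applying \cref{thm:tw2} to each component separately and taking the union of the resulting maps, since no edges run between components.
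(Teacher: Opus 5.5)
Your proposal is correct and follows essentially the same route as the paper: decompose $G$ into paths and cycles, note that these have treewidth at most $2$, and invoke \cref{thm:tw2}. Your explicit bags for the cycle and your gluing of the component decompositions only spell out details that the paper states without proof.
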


\begin{proof}
If $\Delta(G)\le 2$, then every connected component of $G$ is a path or a cycle.
Paths have treewidth $1$ and cycles have treewidth $2$, hence $\tw(G)\le 2$.
The conclusion now follows immediately from \cref{thm:tw2}.
\end{proof}

\subsection{Applications} We conclude with two quick applications of \cref{thm:gen-lb}. First, as discussed in the introduction, it was shown by Abraham, Bartal, and Neiman \cite[Theorem~1]{abraham2009low} that any metric space on $n$ points can be embedded into $\ell_p^{O(e^p \log^2k)}$ with $k$-local distortion $O(\log k/p)$ (for any $k\leq n$ and $1\leq p \leq \log k$). We show that for graphically local embeddings, this fails in a strong sense: below, we provide an example of an $n$ point metric space and a graph $G$ of maximum degree $3$ such that any $G$-local embedding of this metric space into $\ell_2$ incurs $G$-local distortion $\Omega(\log n)$ (recall that $O(\log n)$ \emph{global} distortion is always achievable by Bourgain's theorem). We remark that a similar argument works for all $\ell_p$ spaces with fixed $p\geq 1$. 

\begin{corollary}
\label[corollary]{cor:l1-embedding}
For any integer $n\geq 2$, there exists a metric space $(Z,d_Z)$ with $|Z| = \Theta(n)$ and a graph $G = (Z,E)$ of maximum degree $3$ such that any $G$-local embedding of $(Z, d_Z)$ into $\ell_2$ has $G$-local distortion $\Omega(\log n)$.
\end{corollary}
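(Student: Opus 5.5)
The plan is to apply \cref{thm:gen-lb} with $(Y,d_Y)=\ell_2$ and a starting space $(X,d_X)$ of size about $\sqrt{n}$ that is a standard hard instance for Euclidean embeddings. The quadratic blow-up $|Z|=O(|X|^2)$ in \cref{thm:gen-lb} then gives $|Z|=\Theta(n)$. Because $\log\sqrt{n}=\tfrac12\log n$, this loses only a constant factor in the distortion.

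\textbf{Step 1 (hard instance).} Set $m:=\lfloor\sqrt{n}\rfloor$. For $m$ larger than a suitable absolute constant, let $(X,d_X)$ be the shortest-path metric of an $m$-vertex constant-degree expander. By the classical result of Linial--London--Rabinovich (see, e.g., \cite{matouvsek2013lecture}), every embedding of $(X,d_X)$ into $\ell_2$ has distortion at least $D:=c\log m$ for some absolute constant $c>0$. We may assume $D\ge 1$ by taking $m$ large. The aspect ratio of $X$ is $O(\log m)$, but \cref{cor:l1-embedding} does not ask for any aspect-ratio bound, so this plays no role.

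\textbf{Step 2 (apply the reduction).} Invoke \cref{thm:gen-lb} with this $X$, $Y=\ell_2$, this $D$, and a fixed $\delta$, say $\delta=1/200$. It yields a metric space $(Z,d_Z)$ and a graph $G=(Z,E)$ of maximum degree $3$ such that every $h:Z\to\ell_2$ has $G$-local distortion at least $D-\delta=\Omega(\log n)$.

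\textbf{Step 3 (size).} From the explicit construction in the proof, $|Z|=m+m(2m-2)=2m^2-m$. Since $m=\lfloor\sqrt n\rfloor$, this is $\Theta(n)$.

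\textbf{Small $n$.} For $n$ below the fixed threshold, the bound $\Omega(\log n)$ is only an absolute constant. Any single pair of points joined by an edge then satisfies the conclusion, since every $G$-local distortion is at least $1$. Adjusting the implied constant handles these cases.

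The only genuine input is the existence of an $m$-point metric with Euclidean distortion $\Omega(\log m)$. I expect no real obstacle beyond citing this fact correctly and keeping track of the square-root choice of $m$, which is needed so that $|Z|$ is linear in $n$ rather than quadratic.
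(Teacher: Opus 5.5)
Your proposal is correct and follows essentially the same route as the paper: take an expander metric on $N=\Theta(\sqrt{n})$ points with Euclidean distortion at least $D=c\log N$, apply \cref{thm:gen-lb} with $\delta=1/200$, and read off $|Z|=N(2N-1)=\Theta(n)$ from the explicit construction. Your separate treatment of small $n$ is a harmless addition; the paper simply absorbs that case into the implied constants.
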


\begin{proof}
It is well known that there exists a metric space $(X,d_X)$ on $\Theta(\sqrt{n})$ points such that any embedding of $(X,d_X)$ into $\ell_2$ incurs distortion $\Omega(\log n)$; for instance, one can take $(X,d_X)$ to be the graph metric on a sufficiently good expander with $\Theta(\sqrt{n})$ vertices (see,~e.g.~\cite[Theorem~3.5.3]{matouvsek2013lecture}).

Let $N:=|X|=\Theta(\sqrt{n})$, and let $D := c\log N$ be the corresponding global lower bound (for some absolute $c>0$). Apply \cref{thm:gen-lb} with this $D$ and, say, $\delta:=1/200$. The output is a metric space $(Z,d_Z)$ with $|Z|=O(N^2)=O(n)$ and a graph $G$ of maximum degree $3$ such that any $G$-local embedding into $\ell_2$ has $G$-local distortion at least $D-\delta=\Omega(\log n)$ (absorbing constant factors). By construction, $|Z|=N(2N-1)=\Theta(n)$.
\end{proof}

Next we show that, in general, $G$-local dimension reduction is no easier than global dimension reduction, even for graphs of maximum degree $3$. Previously, such a construction was only known under the additional restriction that the embedding is noncontracting (recall \cite[Theorem~11]{schechtman2009lower} due to Schechtman and Shraibman). 

\begin{corollary}
\label[corollary]{cor:local-JL}
For any fixed $\eta>0$, any integers $n,d \geq 2$, and any $\varepsilon \in ( n^{\eta}/\sqrt{\min(\sqrt{n},d)}, 1/100)$, there exists a set of points $Z \subseteq \mb{R}^d$ and a graph $G = (Z, E)$ of maximum degree $3$ such that the following hold:
\begin{itemize}
\item $|Z| = O(n)$; 
\item the aspect ratio of the Euclidean metric on $Z$ is $O(\log n\cdot \varepsilon^{-2})$;
\item for any map $f: Z \to \mb{R}^m$ satisfying
\[\|x-y\|_2 \leq \|f(x)-f(y)\|_2 \leq (1+\varepsilon)\|x-y\|_2 \qquad \forall \{x,y\} \in E,\]
we must have
\[m = \Omega_\eta(\varepsilon^{-2}\log n).\]  
\end{itemize}
\end{corollary}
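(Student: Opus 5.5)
We will derive \cref{cor:local-JL} from \cref{thm:gen-lb} applied to a hard instance for global Euclidean dimension reduction. We take Larsen and Nelson's lower bound \cite[Theorem~2]{LarsenNelson17}. In the form we need, it says: for $N$ points and any $\varepsilon \in (N^{-0.499},1)$, there is a set $X\subseteq \mathbb{R}^{d'}$ with $|X|=N$ and $d'$ at most polynomial in $N$ (in fact one can take $d' = O(\varepsilon^{-2}\log N)$ or $d'\le N$) such that any embedding of $X$ into $\ell_2^m$ with distortion at most $1+4\varepsilon$ forces $m=\Omega(\varepsilon^{-2}\log N)$. We take $N=\Theta(\sqrt n)$ so that $|Z|=O(N^2)=O(n)$. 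The hypothesis $\varepsilon > n^{\eta}/\sqrt{\min(\sqrt n,d)}$ is there to ensure that $\varepsilon$ lies in the Larsen--Nelson range $\varepsilon \gg N^{-1/2+\eta'}$, and that the ambient dimension of their hard instance fits inside $\mathbb{R}^d$. The constant in $\Omega_\eta$ absorbs the loss from $N^{\eta}$-type slack. We also need to check that their instance has aspect ratio $O(1)$; we believe their points are essentially near-orthonormal or random sign-like vectors, so this holds. If not, we use a rescaled version with bounded aspect ratio.

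The key steps are as follows. First, fix $m$ and suppose $m \le c\,\varepsilon^{-2}\log N$, so that every embedding $X\to\ell_2^m$ has distortion at least $D:=1+4\varepsilon$. Second, apply \cref{thm:gen-lb} with $(Y,d_Y)=\ell_2^m$, this $D$, and $\delta:=\varepsilon$. We use the ``moreover'' clause with $p=2$ to get $Z\subseteq \ell_2$, and the aspect ratio bound $A_Z=O(A_X D^2\log N/\delta)=O(\varepsilon^{-1}\log n)$, which is at most $O(\varepsilon^{-2}\log n)$. Third, observe that any $f$ satisfying the displayed two-sided condition is a $G$-local $(1+\varepsilon)$-embedding with $r=1$. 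Its $G$-local distortion is then at most $1+\varepsilon< D-\delta=1+3\varepsilon$, contradicting \cref{thm:gen-lb}. Hence $m> c\,\varepsilon^{-2}\log N=\Omega(\varepsilon^{-2}\log n)$.

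The main obstacle is dimension bookkeeping. The ``moreover'' construction places $Z$ in $\ell_2^{\binom{N}{2}+2N-2}$, which may exceed $d$. We would fix this by noting that $Z$ consists of $O(N^2)$ points, but it is better to avoid the generic $\binom N2$ embedding. We embed $X$ in its own span, of dimension at most $\min(N-1,d')$, and append only $2N-2$ fresh coordinates for the $v_i$, giving total dimension $O(N+d')$. Ensuring this is at most $d$ is exactly where the constraint $\varepsilon^2 > n^{2\eta}/\min(\sqrt n,d)$ enters: it allows choosing $N$, or the hard-instance dimension $O(\varepsilon^{-2}\log N)$, to fit within $d$. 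If $d<\sqrt n$, we may need to shrink $N$ to about $d$ and absorb the change $\log N=\Theta_\eta(\log n)$, which holds because $N \ge n^{2\eta}$. Verifying this parameter juggling carefully is the step we expect to be fiddliest.
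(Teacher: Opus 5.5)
Your strategy works, and it differs from the paper mainly in how you force $Z\subseteq\mb{R}^d$. The aspect-ratio step, however, is wrong as written.

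\textbf{Aspect ratio.} Rescaling cannot help, because aspect ratio is scale-invariant. You also should not expect $A_X=O(1)$. Besides $0$ and basis vectors, the Larsen--Nelson instance contains $\Theta(\eps^{-2})$-sparse vectors with nonzero entries $\Theta(\eps)$, and distinct such vectors may be only $\Theta(\eps)$ apart. The fact the paper uses is $A_X=O(\eps^{-1})$, and it is exactly what is needed. With $D=1+4\eps$ and $\delta=\eps$, \cref{thm:gen-lb} gives $A_Z=O(A_XD^2\log n/\delta)=O(\eps^{-2}\log n)$. This is why the statement has $\eps^{-2}$ rather than $\eps^{-1}$.

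Two smaller repairs are needed. First, quote Larsen--Nelson in its general form $m=\Omega(\eps^{-2}\log(\eps^2N))$ for $\eps>\log^{0.5001}N/\sqrt{\min(N,d')}$. The $\eps>N^{-0.499}$ version is not implied by the hypothesis when $\eta$ is small. Second, note that the construction of $Z$ depends only on $X$, $D$ and $\delta$. So a single $Z$ serves every $m$, even though you fix $m$ before invoking \cref{thm:gen-lb}.

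\textbf{Dimension.} Your bookkeeping does go through. Take $N:=\min(\lfloor\sqrt n\rfloor,\lfloor d/3\rfloor)$, a Larsen--Nelson instance on $N$ points in $\mb{R}^N$, and $2N-2$ fresh coordinates for the $v_i$. Then $Z\subseteq\mb{R}^{3N-2}\subseteq\mb{R}^d$ and $|Z|=O(n)$. The hypothesis gives $\eps^2N\gtrsim n^{2\eta}$, hence $\log(\eps^2N)=\Omega_\eta(\log n)$. Reducing the hard instance to dimension $O(\eps^{-2}\log N)$ would not help on its own, since the fresh coordinates already cost $\Theta(N)$. The case split is at $d\approx 3\sqrt n$, not $\sqrt n$.

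The paper sidesteps all of this. It always uses $\lfloor\sqrt n\rfloor$ points and builds $Z_0$ in whatever dimension the ``moreover'' construction yields. It then applies the Johnson--Lindenstrauss lemma to all of $Z_0$, landing in $\mb{R}^r$ with $r=O(\eps^{-2}\log n)\le d$. The extra factor $(1+\eps)$ is absorbed by the slack between $(1+\eps)^2$ and the forced lower bound $1+3\eps$.

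Your route avoids that second JL step and does not need the extra slack. The price is reopening the construction to control the ambient dimension, plus a case split on $d$. The paper's route is dimension-agnostic and also produces a point set of dimension only $O(\eps^{-2}\log n)$.
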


 Our result leaves open the following questions. 
\begin{question}
For global dimension reduction, Alon and Klartag \cite{alon2017optimal} prove the lower bound
\[
\Omega\left(\min\left\{d,n,\frac{\log(2+\eps^2n)}{\eps^2}\right\}\right)
\]
throughout the full range of parameters. Is the same lower bound also true for $G$-local dimension reduction? In \cref{cor:local-JL}, we obtain this lower bound only when $\eps$ is polynomially larger than $1/\sqrt{\min(\sqrt n,d)}$.
\end{question}

\begin{question}
\label[question]{q:aspect-ratio}
   More interestingly, do there exist lower bound instances with bounded aspect ratio? Or is it the case that for every Euclidean metric space $X$ on $n$ points whose distances are all between $1$ and $\Phi$, and for every graph $G = (X,E)$ with maximum degree $\Delta$, for every $\eps > 0$, there exists a $G$-local embedding of $X$ into $\ell_2^{m}$ with distortion $(1+\eps)$ and with $m = f(\eps,\Delta, \Phi)$? Observe that this is true for the special case $\Phi = 1$ (\cref{example:equilateral}). Also note that, since storing all distances appearing in $G$ to $(1+\eps)$ multiplicative error only requires $O(n\Delta\log(1/\eps)+ n\Delta\log\log\Phi)$ bits, the existence of such an embedding cannot be ruled out simply by sketching arguments (for the JL lemma, the tight lower bound can be obtained from lower bounds on sketching, see \cite{alon2017optimal}).    
\end{question}

\begin{proof}[Proof of \cref{cor:local-JL}]
Fix $\eta>0$ and let $n' := \lfloor\sqrt{n}\rfloor$. The assertion for bounded $n$ is trivial after decreasing the implicit constant in the conclusion, so we may assume that $n$ is sufficiently large in terms of $\eta$. Our assumption on $\eps$ then implies
\[
\eps > \frac{\log^{0.51}n'}{\sqrt{\min(n',d)}}.
\]
Applying the main result of Larsen and Nelson \cite{LarsenNelson17} with parameter $4\eps$, we obtain a set of points $X\subseteq \mb{R}^d$ of size $n'$ and aspect ratio $O(\eps^{-1})$ such that any embedding of $(X,d_X)$ into $\mb{R}^m$ with distortion $(1+4\eps)$ must satisfy
\[
m = \Omega\left(\eps^{-2}\log(\eps^2n')\right)
  = \Omega_{\eta}\left(\eps^{-2}\log n\right).
\]
Here, the last equality follows since the assumed lower bound on $\eps$ gives $\eps^2n'=\Omega(n^{2\eta})$.

Apply the construction in the proof of the ``moreover'' part of \cref{thm:gen-lb} to $(X,d_X)$ with $D=1+4\eps$ and $\delta=\eps$. This gives a Euclidean metric space $(Z_0,d_0)$ with $|Z_0|=O(n)$ and aspect ratio $O(\eps^{-2}\log n)$, and a graph $G_0=(Z_0,E_0)$ of maximum degree $3$, such that every $G_0$-local embedding of $(Z_0,d_0)$ into $\mb{R}^m$ has distortion at least $1+3\eps$ whenever $m<c_{\eta}\eps^{-2}\log n$, for a sufficiently small constant $c_{\eta}>0$.

By the Johnson-Lindenstrauss lemma, there is an embedding $\phi:Z_0\to\mb{R}^r$ with distortion at most $1+\eps$, where
\[
r=O(\eps^{-2}\log |Z_0|)=O(\eps^{-2}\log n)\leq d.
\]
The last inequality follows from our assumption on $\eps$, provided $n$ is sufficiently large in terms of $\eta$. Identifying $\mb{R}^r$ with a subspace of $\mb{R}^d$, set $Z=\phi(Z_0)$ and let $G=(Z,E)$ be the graph corresponding to $G_0$. Then $|Z|=O(n)$ and the aspect ratio of $Z$ is $O(\eps^{-2}\log n)$.

Suppose that $f:Z\to\mb{R}^m$ satisfies the inequalities in the statement. Then $f\circ\phi$ is a $G_0$-local embedding of $(Z_0,d_0)$ into $\mb{R}^m$ with distortion at most $(1+\eps)^2<1+3\eps$. This is impossible when $m<c_{\eta}\eps^{-2}\log n$, completing the proof.
\end{proof}

\bibliographystyle{amsplain0.bst}
\bibliography{main.bib}

\end{document}